\documentclass[journal,doublecolumn]{IEEEtran}

\usepackage{amsmath,amssymb,amsthm}
\usepackage{graphicx}
\usepackage{pgfplots}
\pgfplotsset{compat=1.18}

\usepackage[style=ieee, backend=biber]{biblatex}
\usepackage{hyperref}
\hypersetup{
 colorlinks,
 linkcolor={blue},
 citecolor={blue}}

\newcommand{\TPC}{\mathrm{TPC}}
\newcommand{\rank}{\mathrm{rank}}

\newcommand{\mltset}[1]{\{\!\{#1\}\!\}}
\newcommand{\ceil}[1]{ {\lceil#1\rceil}}
\newcommand{\floor}[1]{ {\lfloor#1\rfloor}}

\newcommand{\F}{\mathbb{F}_2} 
\newcommand{\calU}{\mathcal{U}} 
\newcommand{\calY}{\mathcal{Y}} 

\newtheorem{theorem}{Theorem}[section]
\newtheorem{lemma}[theorem]{Lemma}

\newtheorem{definition}[theorem]{Definition}

\title{Achieving Torn-Paper Channel Capacity with Successive Revelation}

\author{Rui Xu, Le Wang

\thanks{Rui Xu is with KTH Royal Institute of Technology, Karolinska Institute, and Stockholm University (e-mail: rxu@kth.se). Le Wang is with School of Electrical Engineering and Computer Science, KTH Royal Institute of Technology, 10044 Stockholm, Sweden (email: le6@kth.se). }}

\begin{document}

\maketitle

\begin{abstract}
The torn-paper channel independently cuts a binary codeword at its internal boundaries and outputs the resulting oriented fragments as an unordered multiset. We consider the critical regime $p_N\log N\to\alpha$, in which the channel capacity is $e^{-\alpha}$. Existing coding schemes use a fixed-density pilot to localize fragments, creating a tradeoff between positional information and payload rate.
This paper introduces \textit{successive revelation}, which partitions the codeword into interleaved tracks and decodes them sequentially. Each recovered payload track becomes an additional positional reference for subsequent stages, allowing progressively shorter fragments to be localized.
We establish a finite-track achievable rate whose gap to capacity is $O(1/M)$ for $M$ tracks. Consequently, for every rate below the channel capacity, a finite number of tracks yields a sequence of deterministic codes with vanishing average decoding error probability.
\end{abstract}

\begin{IEEEkeywords}
DNA data storage, synchronization, torn-paper channel
\end{IEEEkeywords}

\section{Introduction}

The torn-paper channel (TPC) models communication through stochastic fragmentation. A binary codeword of length $N$ is independently cut at each internal boundary with probability $p_N$, and the resulting oriented fragments are delivered to the receiver as an unordered multiset. Although originally motivated by DNA-based storage, the TPC is also a fundamental model of communication when fragmentation preserves local content but destroys global ordering.

Shomorony and Vahid~\cite{shomorony2020communicating,shomorony2021torn} introduced the TPC model and established the channel capacity
\[
  C_\TPC=e^{-\alpha}
\]
in the critical regime
\[
  p_N\log N\longrightarrow\alpha\in(0,\infty).
\]
Their converse and random-coding achievability characterize the fundamental limit, but existing coding schemes do not attain it. These schemes interleave payload bits with a synchronization \textit{pilot} string for positional reference. Increasing the reference density makes shorter fragments easier to localize, whereas decreasing it leaves more coordinates for payload. The resulting tradeoff keeps the best previously known structured rates strictly below capacity~\cite{shomorony2021torn,liu2026improved}.

This paper introduces \textit{successive revelation} (SR), a novel code ensemble that addresses this dilemma. The codeword coordinates are partitioned into $M$ periodic tracks. The first is a bootstrap track whose transmitted symbols are known to the decoder, while the remaining $M-1$ tracks carry separately encoded message segments. At stage $j$, the decoder uses the tracks already recovered to localize sufficiently long fragments and decode track $j$.
The newly recovered track then becomes part of the positional reference for subsequent stages. Thus, the reference density increases during decoding, the localization threshold decreases, and fragments that were unusable at an earlier stage can participant in the decoding at later stages.

The SR construction is \textit{capacity-achieving under average error}. That is, for every target rate $R<e^{-\alpha}$, there exists a finite track count $M$ for which the construction admits a sequence of deterministic public designs whose code rates converge to $R$ and whose average decoding error probabilities converge to zero. More precisely, for a fixed $M$, every rate below
\[
  R_M(\alpha)
  =
  \frac{1}{M}\sum_{j=1}^{M-1}
  \left(1+\frac{\alpha M}{j}\right)e^{-\alpha M/j}
\]
is achievable, and the gap to capacity is bounded as
\[
  0
  \leq
  e^{-\alpha}-R_M(\alpha)
  \leq
  \frac{(1+\alpha)e^{-\alpha}}{M}.
\]
Consequently, a finite value of $M$ suffices for every rate strictly below capacity.

The remainder of the paper is organized as follows.
Section~\ref{sec:related-work} provides an overview of related work, followed by Section~\ref{sec:preliminaries}, which introduces the notation, channel model, and error criterion. The main results are presented in Section~\ref{sec:results}. Section~\ref{sec:construction} then describes the SR encoder and decoder, and Section~\ref{sec:analysis} establishes the corresponding reliability and rate theorems. Finally, Section~\ref{sec:conclusion} concludes the paper.

\begin{figure*}[t]
  \centering
  \input{figures/rates.tex}
  \caption{Capacity, interleaved-pilot rates, local-alignment rates, and finite-track SR rates.}
  \label{fig:rates}
\end{figure*}

\section{Related Work}
\label{sec:related-work}

Beyond the capacity characterization and achievability, Shomorony and Vahid proposed the \textit{interleaved-pilot} scheme~\cite{shomorony2020communicating,shomorony2021torn} for the torn-paper channel. For a fixed number $M\geq2$ of interleaved tracks, one track carries a positional reference based on a de Bruijn sequence, while the remaining tracks carry erasure-coded payloads.
The resulting achievable rate is
\[
  R_{\mathrm{IP}}
  =
  \left(1-\frac{1}{M}\right)
  (1+2M\alpha)e^{-2M\alpha}.
\]
Liu and Raviv subsequently introduced \textit{local alignment}, which makes additional short-fragment observations usable and achieves the improved fixed-$M$ rate~\cite{liu2026improved}
\[
  R_{\mathrm{LA}}
  =
  \left(1-\frac{1}{M}\right)
  (1+M\alpha)e^{-M\alpha}.
\]
Both schemes use a positional reference whose density is fixed before decoding. Successive revelation differs from the existing schemes in that every recovered payload track enlarges the reference available to later stages. In Fig.~\ref{fig:rates}, we compare the rate of interleaved-pilot and local-alignment, both optmized over~$M$, with the SR rates for $M=8$ and $M=32$.

Several extensions of the stochastic TPC incorporate additional channel noises. Ravi~\textit{et al.} studied fragment loss and the recovery of messages from incomplete collections of noisy fragments~\cite{ravi2021capacity,ravi2026recovering}. Walter~\textit{et al.} considered coding schemes for the noisy TPC~\cite{walter2026coding}.
A complementary line of research studies practical reassembly. Embedded and nested Varshamov--Tenengolts checks have been used to reassemble nonoverlapping random fragments~\cite{nassirpour2020embedded,nassirpour2023dna}.
Jiao \textit{et al.} developed a nested-hashing construction and evaluated its error probability and decoding complexity numerically~\cite{jiao2025efficient}. These works primarily target explicit fragment reassembly rather than the capacity of the stochastic TPC.

Adversarial fragmentation replaces random cut locations with worst-case cuts. Bar-Lev \textit{et al.} studied a model in which every nonterminal fragment must satisfy prescribed lower and upper length bounds~\cite{bar2023adversarial}. Wang \textit{et al.} introduced break-resilient codes, which impose no minimum fragment length and instead bound the number of arbitrary cuts. They established a redundancy lower bound and offered a near-optimal construction~\cite{wang2024break,wang2026break}. Recently, an optimal code construction for this model is given in~\cite{wang2026optimal}. The model was later extended to permit fragment loss subject to a bound on the total lost length~\cite{wang2026break2}. The use of recovered information as positional reference in SR is inspired by the successive reconstruction principle in break-resilient coding, but the present work addresses stochastic cuts and average-error capacity.

\section{Preliminaries}
\label{sec:preliminaries}

\subsection{Notation}

Throughout the paper, all logarithms are base two.
We use~$\|$ to denote string concatenation, $\oplus$ to denote addition in~$\F$, and $\xrightarrow{\mathrm p}$ for convergence in probability. For a vector $V$ and an ordered index set $I$, the notation $V_I$ denotes the corresponding subvector.
 
\subsection{Channel Model}

Let
\[
  X=(X_0,X_1,\ldots,X_{N-1})\in\F^N
\]
be the transmitted codeword.
The stochastic torn-paper channel samples a cut vector
\[
  B=(B_0,B_1,\ldots,B_{N-2})\in\F^{N-1},
\]
whose entries are mutually independent and satisfy $\Pr(B_i=1)=p_N$.
The event $B_i=1$ places a cut between $X_i$ and $X_{i+1}$.
Consequently, the torn-paper channel produces
\[
  J=1+\sum_{i=0}^{N-2}B_i
\]
nonempty contiguous fragments.
Under the critical regime
\begin{equation}
  p_N\log N\longrightarrow\alpha\in(0,\infty),
  \label{eq:critical-regime}
\end{equation}
fragments have typical length~$\Theta(\log N)$.
The channel retains the orientation of each fragment and outputs the unordered multiset
\[
  \calY
  =
  \mltset{Y^{(1)},Y^{(2)},\ldots,Y^{(J)}}.
\]

\subsection{Codes and Error Criterion}

A blocklength-$N$ binary code consists of a finite message set $\calU_N$, an injective encoder $f_N:\calU_N\to\F^N$, and a decoder that maps every possible channel output to $\calU_N\cup\{\mathsf e\}$, where $\mathsf e$ denotes failure.
Its rate is
\[
  R_N\triangleq\frac{1}{N}\log|\calU_N|.
\]
The message $U$ is assumed to be uniform on $\calU_N$.
A rate $R$ is \textit{achievable under average error} if there is a sequence of binary codes with blocklengths tending to infinity for which $R_N\to R$ and $\Pr(\widehat U\neq U)\to0$.
The capacity is the supremum of all achievable rates.

Our construction is derived from a random public design $D_N$.
A realization $d_N$, known to both the encoder and decoder, specifies deterministic encoding and decoding maps.
For $u\in\calU_N$, define
\[
  P_e(u\mid d_N)
  \triangleq
  \Pr_B\{\widehat U\neq u\mid U=u,D_N=d_N\},
\]
where the probability is over the random cuts.
The average error probability of that realization is
\[
  \overline P_e(d_N)
  \triangleq
  \frac{1}{|\calU_N|}
  \sum_{u\in\calU_N}P_e(u\mid d_N).
\]
Section~\ref{sec:analysis} selects deterministic designs so that the resulting encoders are injective.

\section{Main Results}
\label{sec:results}

We first summarize the SR construction and then state its finite-track reliability and capacity guarantees.
The $N$ codeword coordinates are partitioned into $M\geq2$ interleaved tracks according to their residues modulo $M$.
The unshifted bootstrap track, indexed by $0$, is all zero, while tracks $1,\ldots,M-1$ carry disjoint message segments.
Let $n_j$ be the length of track $j$ and fix rates $r_1,\ldots,r_{M-1}$.
For $1\leq j<M$, the quantity $k_j\triangleq\floor{n_jr_j}$ is the number of message bits assigned to track $j$, and $G_j\in\F^{k_j\times n_j}$ is its generator matrix.
The tracks are interleaved and then shifted by a public random shift vector $Z\in\F^N$.
Thus, the public design is
\[
  D_N=(Z,G_1,\ldots,G_{M-1}).
\]
All entries of $Z,G_1,\ldots,G_{M-1}$ are mutually independent and distributed as $\operatorname{Bernoulli}(1/2)$.

Decoding proceeds successively.
At stage $j$, the decoder uses the already recovered tracks $0,\ldots,j-1$ as a positional reference to localize eligible fragments and recover track $j$.
Once recovered, track $j$ joins the reference used in later stages.
The reference therefore grows throughout decoding, allowing the fragment-length threshold to decrease.

\begin{definition}[Long-fragment coverage]
For $t\geq0$, define
\begin{equation}
  V_\alpha(t)\triangleq(1+\alpha t)e^{-\alpha t}.
  \label{eq:coverage-function}
\end{equation}
By~\cite[Lemma~6]{shomorony2021torn}, $V_\alpha(t)$ is the asymptotic fraction of coordinates covered by fragments of length at least $\ceil{t\log N}$.
\end{definition}

\begin{theorem}[Finite-track reliability]
\label{thm:decoding}
Fix $M\geq2$, $\xi>0$, and rates $r_1,\ldots,r_{M-1}$ satisfying
\begin{equation}
  0\leq r_j<
  V_\alpha\left((1+2\xi)\frac{M}{j}\right),
  \qquad 1\leq j<M.
  \label{eq:track-rate-condition}
\end{equation}
For the encoder and decoder in Section~\ref{sec:construction}, under~\eqref{eq:critical-regime},
\begin{equation}
  \epsilon_N
  \triangleq
  \max_{u\in\calU_N}
  \mathrm E_{D_N}\left[P_e(u\mid D_N)\right]
  \longrightarrow0.
  \label{eq:ensemble-error}
\end{equation}
\end{theorem}

The threshold in~\eqref{eq:track-rate-condition} has an intuitive interpretation. At stage $j$, a fragment of length $\ell$ contains approximately $j\ell/M$ symbols from recovered tracks.
Approximately $\log N$ independent reference symbols are needed to distinguish among $O(N)$ possible starting positions, which motivates a fragment threshold of order $(M/j)\log N$.
The factor $1+2\xi$ supplies slack.
Accordingly, stage $j$ provides approximately
\[
  n_jV_\alpha\left((1+2\xi)\frac{M}{j}\right)
\]
new bit observations. The strict rate inequality ensures that, with probability approaching one, the number of bit observations exceeds the number of message bits $k_j$.

\begin{theorem}[Capacity achievability]
\label{thm:capacity-achievability}
For $M\geq2$, define
\begin{equation}
  R_M(\alpha)
  \triangleq
  \frac{1}{M}\sum_{j=1}^{M-1}
  \left(1+\frac{\alpha M}{j}\right)e^{-\alpha M/j}.
  \label{eq:finite-M-rate}
\end{equation}
For every $0\leq R<R_M(\alpha)$, there exist $\xi>0$ and $r_1,\ldots,r_{M-1}$ satisfying~\eqref{eq:track-rate-condition} and
\[
  \frac{1}{M}\sum_{j=1}^{M-1}r_j=R.
\]
Moreover, there is a sequence of deterministic public designs $(d_N)$ for which the associated encoders are injective and
\[
  \frac{\sum_{j=1}^{M-1}\floor{n_jr_j}}{N}
  \longrightarrow R,
  \qquad
  \overline P_e(d_N)\longrightarrow0.
\]
Finally,
\begin{equation}
  0\leq e^{-\alpha}-R_M(\alpha)
  \leq\frac{(1+\alpha)e^{-\alpha}}{M}.
  \label{eq:finite-M-gap}
\end{equation}
Consequently, the SR construction is capacity-achieving under average error.
\end{theorem}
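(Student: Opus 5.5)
The proof splits into three parts: the existence of $\xi$ and the rates $r_j$, the derandomization from Theorem~\ref{thm:decoding}, and the gap bound~\eqref{eq:finite-M-gap}.

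\emph{Choosing $\xi$ and $r_j$.} Since $V_\alpha$ is continuous, the map $\xi\mapsto\frac1M\sum_{j=1}^{M-1}V_\alpha((1+2\xi)M/j)$ is continuous. At $\xi=0$ it equals $R_M(\alpha)$. Given $R<R_M(\alpha)$, we can therefore pick $\xi>0$ small enough that
\[
S_\xi\triangleq\frac1M\sum_{j=1}^{M-1}V_\alpha\left((1+2\xi)\frac Mj\right)>R.
\]
We then set $r_j=(R/S_\xi)\,V_\alpha((1+2\xi)M/j)$. Because $R/S_\xi<1$, these rates satisfy~\eqref{eq:track-rate-condition} strictly, and their average is exactly $R$. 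Each track has length $n_j=N/M+O(1)$, so $\sum_j\floor{n_jr_j}/N\to R$.

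\emph{Derandomization.} Theorem~\ref{thm:decoding} gives $\mathrm E_{D_N}[\overline P_e(D_N)]\le\epsilon_N\to0$, where $\overline P_e$ is averaged over messages. Injectivity needs care. The encoder is $u\mapsto \mathrm{interleave}(0,u_1G_1,\dots,u_{M-1}G_{M-1})\oplus Z$, so it is injective exactly when every $G_j$ has full row rank $k_j$. Since $k_j\le n_j$, a uniformly random $k_j\times n_j$ matrix is rank-deficient with probability at most $\sum_{i}2^{i-n_j}\le 2^{k_j-n_j}$. If $r_j\le 1-\delta$ for some fixed $\delta>0$, this probability tends to zero. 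This holds here because $V_\alpha(t)<1$ for $t>0$. If some $r_j=0$, that track is empty, which is trivial.

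By Markov's inequality and a union bound, the event
\[
\{\overline P_e(D_N)\le\sqrt{\epsilon_N}\}\cap\{\text{all }G_j\text{ full rank}\}
\]
has probability at least $1-\sqrt{\epsilon_N}-o(1)>0$ for large $N$. We pick $d_N$ from this event.

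\emph{Gap bound.} This step is a calculus argument and is where care is needed. Let $g(x)=V_\alpha(1/x)=(1+\alpha/x)e^{-\alpha/x}$ on $(0,1]$, so that $R_M=\frac1M\sum_{j=1}^{M-1}g(j/M)$ and $g(1)=e^{-\alpha}$. Since $g'(x)=(\alpha^2/x^3)e^{-\alpha/x}\ge0$, $g$ is nondecreasing. Hence
\[
R_M\le\int_0^1 g(x)\,dx\le g(1)=e^{-\alpha}
\]
using a left Riemann sum over $[j/M,(j+1)/M]$ for $j=0,\dots,M-1$ with $g(0^+)=0$. This gives the lower inequality in~\eqref{eq:finite-M-gap}.

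For the upper inequality, monotonicity gives $\frac1M g(j/M)\ge\int_{(j-1)/M}^{j/M}g$. Summing,
\[
R_M\ge\int_0^{(M-1)/M}g=\int_0^1 g-\int_{(M-1)/M}^1 g\ge\int_0^1 g-\frac{e^{-\alpha}}M.
\]
It then suffices to show $e^{-\alpha}-\int_0^1g\le \alpha e^{-\alpha}/M$, which is not obviously true since the left side does not depend on $M$. So I would instead compare termwise. Writing
\[
e^{-\alpha}-R_M=\frac{e^{-\alpha}}M+\frac1M\sum_{j=1}^{M-1}\bigl(g(1)-g(j/M)\bigr),
\]
the problem reduces to bounding $\sum_{j=1}^{M-1}(g(1)-g(j/M))\le\alpha e^{-\alpha}$. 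Note that $g(1)-g(x)=\int_x^1 g'$. Interchanging sums gives
\[
\sum_{j=1}^{M-1}\int_{j/M}^1g'(y)\,dy=\int_0^1 g'(y)\,\#\{j\ge1: j/M<y\}\,dy\le\int_0^1 My\,g'(y)\,dy.
\]
Integrating by parts, $\int_0^1 y\,g'(y)\,dy=g(1)-\int_0^1g$. This again requires controlling $e^{-\alpha}-\int_0^1 g$, which is fixed rather than $O(1/M)$ in general, so the termwise comparison as set up does not close.

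The main obstacle is therefore to obtain the sharp constant, namely to show $\sum_{j<M}(1-g(j/M)/g(1))\le\alpha$. I would first try the substitution $s=\alpha M/j$, which rewrites the $j$-th term via $\int_\alpha^{s}te^{-t}\,dt$. I would then check the inequality directly through a discrete summation-by-parts estimate, $g(1)-g(j/M)\le g'(\cdot)(1-j/M)$, together with the explicit form of $g'$.
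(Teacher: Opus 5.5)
Your first two parts are sound and match the paper. The choice of $\xi$ and $r_j$ by continuity of $\xi\mapsto\frac1M\sum_j V_\alpha((1+2\xi)M/j)$ is identical to the paper's. Your Markov-plus-union-bound selection of $d_N$ is interchangeable with the paper's conditioning on the event that every $G_j$ has full row rank.

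\textbf{The gap bound is not established.} The failure comes from two calculus slips, not from a genuine obstacle.
\begin{itemize}
\item $g(1)=V_\alpha(1)=(1+\alpha)e^{-\alpha}$, not $e^{-\alpha}$. Your chain $R_M\le\int_0^1 g\le g(1)$ therefore only gives $R_M\le(1+\alpha)e^{-\alpha}$. Even the lower inequality $0\le e^{-\alpha}-R_M$ is unproved.
\item You never evaluate $\int_0^1 g$, and this is the missing idea. Since $\frac{d}{dx}\bigl(xe^{-\alpha/x}\bigr)=(1+\alpha/x)e^{-\alpha/x}=g(x)$ and $xe^{-\alpha/x}\to0$ as $x\downarrow0$, you get $\int_0^1 g(x)\,dx=e^{-\alpha}$ exactly.
\end{itemize}
Without this identity you cannot show $R_M\to e^{-\alpha}$. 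So the closing claim that SR is capacity-achieving is also unsupported; that step (pick $M$ with $R<R_M(\alpha)$, then invoke the known converse) is missing from your write-up anyway. Your remark that $e^{-\alpha}-\int_0^1 g$ is a fixed, $M$-independent quantity should have been a warning sign. If it were positive, $R_M$ would converge strictly below capacity and the theorem would be false. In fact it is zero.

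\textbf{With both corrections, your own sandwich closes at once.} This is essentially the paper's argument. The paper writes $R_M$ as the left Riemann sum of the increasing function $h=g$, including the $j=0$ term $h(0)=0$, and telescopes the per-interval errors to $(h(1)-h(0))/M$. Concretely:
\begin{itemize}
\item $R_M\le\int_0^1 g=e^{-\alpha}$;
\item $R_M\ge\int_0^{(M-1)/M}g\ge e^{-\alpha}-g(1)/M=e^{-\alpha}-(1+\alpha)e^{-\alpha}/M$.
\end{itemize}

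\textbf{The termwise route you propose as the main obstacle would not work.} Your decomposition $e^{-\alpha}-R_M=\frac{e^{-\alpha}}{M}+\frac1M\sum_{j=1}^{M-1}\bigl(g(1)-g(j/M)\bigr)$ is an identity only if $g(1)=e^{-\alpha}$, which is false. The ``sharp constant'' inequality $\sum_{j<M}\bigl(1-g(j/M)/g(1)\bigr)\le\alpha$ that you planned to prove is false for large $M$. Its left side grows like $M\alpha/(1+\alpha)$, because $\int_0^1 g/g(1)=1/(1+\alpha)$. No substitution or summation-by-parts estimate can rescue it.
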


\section{The Successive-Revelation Construction}
\label{sec:construction}

\subsection{Encoder}

Fix $M\geq2$, $\xi>0$, and track rates $r_1,\ldots,r_{M-1}\in[0,1)$.
Let $N\geq M$ and write
\[
  N=qM+s,
  \qquad 0\leq s<M.
\]
The coordinates are partitioned into tracks according to their residues modulo $M$. For $0\leq j<M$, track $j$ has length
\[
  n_j
  =
  \begin{cases}
    q+1, & 0\leq j<s,\\
    q,   & s\leq j<M,
  \end{cases}
\]
and coordinate set
\[
  T_j
  \triangleq
  \{j,j+M,\ldots,j+(n_j-1)M\}.
\]

For $1\leq j<M$, set $k_j\triangleq\floor{n_jr_j}$ and
\[
  K\triangleq\sum_{j=1}^{M-1}k_j.
\]
The message set is $\calU_N=\F^K$. The encoder partitions
\[
  U
  =
  U^{(1)}\|U^{(2)}\|\cdots\|U^{(M-1)},
  \qquad
  U^{(j)}\in\F^{k_j}.
\]

The public design consists of $Z\in\F^N$ and matrices
\[
  G_j\in\F^{k_j\times n_j},
  \qquad 1\leq j<M.
\]
Their entries are mutually independent, and each is distributed as $\operatorname{Bernoulli}(1/2)$.

The encoder forms
\begin{equation}
  C^{(j)}
  =
  \begin{cases}
    0^{n_0}, & j=0,\\
    U^{(j)}G_j, & 1\leq j<M,
  \end{cases}
  \label{eq:encoded-tracks}
\end{equation}
and interleaves the tracks into
\begin{equation}
  C=(C_0,C_1,\ldots,C_{N-1})\in\F^N,
  \qquad
  C_{j+aM}=C^{(j)}_a,
  \label{eq:interleaving}
\end{equation}
for $0\leq j<M$ and $0\leq a<n_j$.
It then outputs
\begin{equation}
  X=C\oplus Z.
  \label{eq:shifted-codeword}
\end{equation}

\subsection{Decoder}

The decoder proceeds through stages $j=1,\ldots,M-1$.
Define
\[
  S_j\triangleq T_0\cup T_1\cup\cdots\cup T_{j-1}.
\]
At the beginning of stage $j$, provided all preceding stages have succeeded, reconstructed values $\widehat x_i$ are available for every $i\in S_j$. The bootstrap track is known from the public design, so initially
\[
  \widehat X_{aM}=Z^{(0)}_a,
  \qquad 0\leq a<n_0.
\]

The decoder maintains an unlocalized multiset $\mathcal R$, initially equal to $\calY$, and a cache $\mathcal P$ of localized fragment--position pairs, initially empty. At stage $j$, it uses the threshold
\begin{equation}
  L_{j,N}
  \triangleq
  \ceil{(1+2\xi)\frac{M}{j}\log N}.
  \label{eq:stage-fragment-threshold}
\end{equation}
For each fragment $Y=(Y_0,\ldots,Y_{\ell-1})\in\mathcal R$ with $\ell\geq L_{j,N}$, the decoder performs the following localization test. For a candidate starting coordinate $i\in\{0,\ldots,N-\ell\}$, let
\[
  Q_j(i,\ell)
  \triangleq
  \{r\in\{0,\ldots,\ell-1\}:i+r\in S_j\}.
\]
The positions consistent with the current reference are
\small\begin{equation} \label{eq:candidate-set}
  \begin{split}
     &\Lambda_j(Y)
      \triangleq\\
      &\{
        i\in\{0,\ldots,N-\ell\}:
        Y_r=\widehat X_{i+r}
        \text{ for every }r\in Q_j(i,\ell)
      \}.
  \end{split}
\end{equation}\normalsize
If $\Lambda_j(Y)=\{i\}$, the decoder records the pair $(Y,i)$ in $\mathcal P$ and removes that copy of $Y$ from $\mathcal R$. If the candidate set is not a singleton, the fragment remains in $\mathcal R$ and may be reconsidered at a later stage. Every recorded placement remains in $\mathcal P$ for all subsequent stages.

At stage $j$, the decoder uses every pair $(Y,i)\in\mathcal P$ with $|Y|\geq L_{j,N}$. If the physical coordinate $j+aM$ lies in the recorded placement, the pair supplies
\[
  W^{(j)}_a\triangleq Y_{j+aM-i}.
\]
Let $O_j\subseteq\{0,\ldots,n_j-1\}$ be the set of track indices observed in this way. If two cached placements supply a value for the same index, the decoder declares failure.

For $0\leq j<M$, define the subsequence
\[
  Z^{(j)}
  \triangleq
  (Z_{j+aM})_{a=0}^{n_j-1}.
\]
After ordering the elements of $O_j$ increasingly, the decoder removes the public shift by computing
\[
  V^{(j)}_{O_j}
  \triangleq
  W^{(j)}_{O_j}\oplus Z^{(j)}_{O_j}.
\]
Let $G_{j,O_j}$ be the submatrix of $G_j$ consisting of the columns indexed by $O_j$.
The decoder solves
\begin{equation}
  \widehat U^{(j)}G_{j,O_j}
  =
  V^{(j)}_{O_j}.
  \label{eq:track-decoding-system}
\end{equation}
If this system has a unique solution, the decoder reconstructs
\[
  \widehat C^{(j)}=\widehat U^{(j)}G_j,
  \qquad
  \widehat X^{(j)}=\widehat C^{(j)}\oplus Z^{(j)},
\]
and stores
\[
  \widehat X_{j+aM}\triangleq\widehat X^{(j)}_a,
  \qquad 0\leq a<n_j.
\]
Otherwise, it declares failure. After all stages have succeeded, the decoder outputs
\[
  \widehat U
  =
  \widehat U^{(1)}\|\widehat U^{(2)}
  \|\cdots\|\widehat U^{(M-1)}.
\]
If any stage fails, it outputs $\mathsf e$.

\section{Reliability and Rate Analysis}
\label{sec:analysis}

This section proves the main results in Section~\ref{sec:results} using four ingredients. First, we show that long fragments cover a predictable fraction of every periodic track. Second, the random shift makes an incorrect fragment placement unlikely to match the revealed tracks. Third, after localization, the observed submatrix of each random generator matrix has full row rank with high probability, enabling recovery of the corresponding track. Finally, we show that the finite-track rates form a Riemann sum that approaches the channel capacity.

We use the decoder's persistent cache throughout the analysis. When a fragment is uniquely localized, it is transferred from the unlocalized multiset~$\mathcal R$ to the cache~$\mathcal P$ together with its starting coordinate.
Its recorded placement remains available at every subsequent stage. A fragment that cannot yet be localized remains in~$\mathcal R$ and may be reconsidered after more tracks have been recovered. Thus, at stage~$j$, the set~$O_j$ includes observations from every eligible fragment localized at or before that stage.

\subsection{Random Shift and Long-Fragment Coverage}

We begin with a useful consequence of the random shift vector.

\begin{lemma}[Random-shift independence]
\label{lem:random-shift}
Fix a message~$u\in\calU_N$.
Under the random public design, the transmitted word~$X$ is uniform on~$\F^N$ and is independent of the generator matrices~$G_1,\ldots,G_{M-1}$.
\end{lemma}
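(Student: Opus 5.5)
The proof is a one-time-pad argument: the shift $Z$ hides the codeword $C$ completely, and $Z$ is independent of the generator matrices.

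With the message $u$ fixed, the interleaved word $C$ of~\eqref{eq:interleaving} is a deterministic function of $(G_1,\ldots,G_{M-1})$ alone. Write $C=c(G_1,\ldots,G_{M-1})$, where $c$ applies~\eqref{eq:encoded-tracks} with $U^{(j)}=u^{(j)}$ and then interleaves. The bootstrap track is identically zero and does not depend on the design at all. By construction, $Z$ is uniform on $\F^N$, since its entries are i.i.d.\ $\operatorname{Bernoulli}(1/2)$. It is also independent of the tuple $\mathbf G\triangleq(G_1,\ldots,G_{M-1})$, because all design entries are mutually independent.

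Fix any $x\in\F^N$ and any realization $\mathbf g=(g_1,\ldots,g_{M-1})$ of the matrices, and compute the joint probability directly. By~\eqref{eq:shifted-codeword},
\[
  \Pr\{X=x,\ \mathbf G=\mathbf g\}
  =
  \Pr\{Z=x\oplus c(\mathbf g),\ \mathbf G=\mathbf g\}.
\]
Independence of $Z$ and $\mathbf G$ factors the right side as
\[
  \Pr\{Z=x\oplus c(\mathbf g)\}\Pr\{\mathbf G=\mathbf g\}
  =
  2^{-N}\Pr\{\mathbf G=\mathbf g\}.
\]
The last step uses that $Z$ is uniform, so it puts mass $2^{-N}$ on the single vector $x\oplus c(\mathbf g)$, and this mass does not depend on $\mathbf g$. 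Summing over $\mathbf g$ gives $\Pr\{X=x\}=2^{-N}$, so $X$ is uniform on $\F^N$. The joint law therefore factors as $\Pr\{X=x\}\Pr\{\mathbf G=\mathbf g\}$ for all $x$ and $\mathbf g$, which is exactly independence of $X$ and $(G_1,\ldots,G_{M-1})$.

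There is no real obstacle here. The only point needing care is that the translation $z\mapsto z\oplus c(\mathbf g)$ is a bijection of $\F^N$ for each fixed $\mathbf g$. This is what makes the conditional law of $X$ given $\mathbf G=\mathbf g$ uniform for every $\mathbf g$, and the argument would fail if $C$ also depended on $Z$.
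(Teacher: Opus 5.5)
Your proof is correct and takes the same route as the paper. For fixed $u$ and $\mathbf G=\mathbf g$, $X$ is a translate of the uniform $Z$, so $X=x$ has probability $2^{-N}$ regardless of $\mathbf g$. The only difference is cosmetic: you compute the joint law and state explicitly that $Z$ is independent of $\mathbf G$, whereas the paper writes the conditional probability directly and leaves that independence implicit.
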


\begin{proof}
Let~$G=(G_1,\ldots,G_{M-1})$.
For fixed~$u$ and~$G=g$, the unshifted word~$C(u,g)$ is deterministic, whereas~$Z$ is uniform on~$\F^N$.
Consequently, for every~$x\in\F^N$, the conditional probability
\begin{equation}
  \Pr\{X=x\mid G=g,U=u\}
  =
  \Pr\{Z=x\oplus C(u,g)\}
  =2^{-N}
\end{equation}
does not depend on~$g$. Therefore, the coordinates of~$X$ are mutually independent Bernoulli$(1/2)$ random variables, and~$X$ is independent of~$G$.
\end{proof}

For each channel-fragment occurrence~$Y$ (with identical copies distinguished only for analysis), let~$I(Y)\subseteq\{0,\ldots,N-1\}$ denote its true coordinate interval; this interval is not supplied to the decoder. For fixed~$t>0$, define the total long-fragment coverage and its track-$a$ counterpart by
\begin{align}
  H_N(t)
  &\triangleq
  \sum_{Y\in\calY}|Y|\,
  \mathbf 1_{\{|Y|\geq \ceil{t\log N}\}},
  \label{eq:total-long-coverage}\\
  H_{a,N}(t)
  &\triangleq
  \sum_{Y\in\calY}|I(Y)\cap T_a|\,
  \mathbf 1_{\{|Y|\geq \ceil{t\log N}\}},
  \qquad 0\leq a<M.
  \label{eq:track-long-coverage}
\end{align}

\begin{lemma}[Long-fragment coverage on each track]
\label{lem:track-coverage}
For every fixed~$t>0$ and~$a\in\{0,\ldots,M-1\}$,
\begin{equation}
  \frac{H_N(t)}{N}
  \xrightarrow{\mathrm p}
  V_\alpha(t)
  \qquad\text{and}\qquad
  \frac{H_{a,N}(t)}{n_a}
  \xrightarrow{\mathrm p}
  V_\alpha(t).
  \label{eq:track-coverage-limit}
\end{equation}
Moreover, the number~$J$ of fragments satisfies~$J/N\xrightarrow{\mathrm p}0$.
\end{lemma}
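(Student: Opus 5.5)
The proof of $J/N\to 0$ is the easiest piece. Since $J-1\sim\mathrm{Binomial}(N-1,p_N)$ and $p_N\sim\alpha/\log N\to0$, we have $\mathrm E[J]/N\le 1/N+p_N\to0$. Markov's inequality then gives $J/N\xrightarrow{\mathrm p}0$.

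For the total coverage, the statement $H_N(t)/N\xrightarrow{\mathrm p}V_\alpha(t)$ is exactly \cite[Lemma~6]{shomorony2021torn}, as recorded in the definition of $V_\alpha$. If a self-contained argument is wanted, it goes through a first and second moment computation. Write $\ell_N=\ceil{t\log N}$. Coordinate $c$ is covered by a long fragment iff the maximal cut-free run containing $c$ has length at least $\ell_N$. So $H_N(t)=\sum_c \mathbf 1_{E_c}$, where $E_c$ depends only on the cuts within distance $\ell_N$ of $c$. For $c$ at distance at least $\ell_N$ from both ends, the probability $\Pr(E_c)$ is a sum over left and right extents $(u,v)$ with $u+v+1\ge\ell_N$ of $p_N^2(1-p_N)^{u+v}$. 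Evaluating this sum gives
\[
\Pr(E_c)=(1-p_N)^{\ell_N-1}\bigl(1+(\ell_N-1)p_N\bigr)+o(1)\to(1+\alpha t)e^{-\alpha t}.
\]
Here one uses $(1-p_N)^{t\log N}\to e^{-\alpha t}$ because $p_N\log N\to\alpha$; if logarithms are base two, the constant conventions must simply be matched to the cited lemma. The $O(\log N)$ boundary coordinates contribute $o(N)$.

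The variance bound uses the same dependence structure. The events $E_c$ and $E_{c'}$ are independent whenever $|c-c'|>2\ell_N$, so $\mathrm{Var}(H_N)\le N(4\ell_N+1)=O(N\log N)=o(N^2)$, and Chebyshev's inequality gives concentration.

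The per-track statement follows from the same computation. Write
\[
H_{a,N}(t)=\sum_{c\in T_a}\mathbf 1_{E_c},
\]
since $|I(Y)\cap T_a|$ counts the track-$a$ coordinates covered by $Y$. The expectation step is unchanged: every interior $c\in T_a$ has $\Pr(E_c)\to V_\alpha(t)$ uniformly, and $T_a$ contains at most $O(\log N)$ boundary coordinates. Since $n_a\ge\floor{N/M}$, the boundary terms contribute $o(n_a)$, so $\mathrm E[H_{a,N}]/n_a\to V_\alpha(t)$. For the variance, the pairs in $T_a\times T_a$ within distance $2\ell_N$ number at most $n_a(4\ell_N/M+1)$, so $\mathrm{Var}(H_{a,N})=O(n_a\log N)=o(n_a^2)$, and Chebyshev's inequality finishes the argument.

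The main obstacle is the uniform asymptotics of $\Pr(E_c)$. This requires handling the ceiling in $\ell_N$, the $o(1)$ error in $p_N\log N\to\alpha$, and the truncation near the ends of the word. I would handle these by sandwiching $\ell_N$ between $t\log N$ and $t\log N+1$ and using continuity of $V_\alpha$. Alternatively, I would simply cite \cite[Lemma~6]{shomorony2021torn} for the marginal limit, so that only the periodic-track variance argument above remains to be proved.
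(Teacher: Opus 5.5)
Your proof is correct, but for the per-track claim it takes a different route from the paper.

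\textbf{The paper's route.} The paper never computes a per-track moment. It cites \cite[Lemma~6]{shomorony2021torn} for $H_N(t)/N$. It then uses the deterministic fact that every integer interval $I$ satisfies $\left||I\cap T_a|-\frac{|I|}{M}\right|\le 1$. Summing over long fragments gives the pathwise bound $|H_{a,N}(t)-H_N(t)/M|\le J$. The clause $J/N\xrightarrow{\mathrm p}0$ is in the lemma precisely to kill this error term.

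\textbf{Your route.} You write $H_{a,N}(t)=\sum_{c\in T_a}\mathbf 1_{E_c}$ and run a first- and second-moment argument directly on the track. This uses the fact that each $E_c$ depends only on $O(\log N)$ neighboring cuts. As a result, $J/N\to0$ is proved but never actually used. Your per-coordinate probability $(1-p_N)^{\ell_N-1}(1+(\ell_N-1)p_N)$ is the same size-biased geometric tail the paper records in \eqref{eq:size-biased-geometric-tail}. For coordinates at distance at least $\ell_N$ from the ends it holds exactly, so the uniformity you flag as the main obstacle is not a real difficulty. Your caveat about the logarithm base is also unnecessary: $p_N$ and $\ell_N$ use the same base, so $p_N\ell_N\to\alpha t$ either way.

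\textbf{What each approach buys.} Yours is self-contained: it reproves the cited coverage lemma and gives an explicit variance bound of $O(N\log N)$. The paper's is shorter and more robust. Its reduction from total to per-track coverage is purely combinatorial, so it transfers any concentration result for $H_N(t)$ to every periodic track, needing only that the number of fragments is $o(N)$.
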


\begin{proof}
The first limit is the standard coverage lemma for the torn-paper channel~\cite[Lemma~6]{shomorony2021torn}. We briefly recall why its limit has the stated form.
If~$L$ is geometric with parameter~$p_N$ on~$\{1,2,\ldots\}$ and~$h_N=\ceil{t\log N}$, then the memoryless property gives
\begin{equation}
  p_N\,\mathrm E[L\mathbf 1_{\{L\geq h_N\}}]
  =
  (1+p_N(h_N-1))(1-p_N)^{h_N-1}.
  \label{eq:size-biased-geometric-tail}
\end{equation}
Since~$p_N\log N\to\alpha>0$, we have $p_N\sim\alpha/\log N$ and hence $Np_N^2\to\infty$. The right-hand side of~\eqref{eq:size-biased-geometric-tail} converges to $(1+\alpha t)e^{-\alpha t}=V_\alpha(t)$.
The cited coverage lemma shows that the realized coverage concentrates around this value. It also accounts for the terminal fragment, whose possible truncation has no asymptotic effect.

It remains to pass from total coverage to one periodic track.
For every integer interval~$I$,
\begin{equation}
  \left||I\cap T_a|-\frac{|I|}{M}\right|\leq 1.
  \label{eq:periodic-track-discrepancy}
\end{equation}
Summing this inequality over the long fragments yields
\begin{equation}
  \left|H_{a,N}(t)-\frac{H_N(t)}{M}\right|
  \leq J.
  \label{eq:track-total-discrepancy}
\end{equation}
Furthermore,
\begin{equation}
  \mathrm E\left[\frac{J}{N}\right]
  =\frac{1+(N-1)p_N}{N}
  \longrightarrow0,
\end{equation}
so~$J/N\xrightarrow{\mathrm p}0$ by Markov's inequality.
Finally,~$n_a=N/M+O(1)$. Dividing~\eqref{eq:track-total-discrepancy} by~$n_a$ and using the first limit proves the second.
\end{proof}

\subsection{Localization from previously recovered tracks}

Fix a stage~$j\in\{1,\ldots,M-1\}$.
Define an ideal stage-$j$ localization rule obtained by supplying the true values $(X_i)_{i\in S_j}$ as its positional reference. The induction in the proof of Theorem~\ref{thm:decoding} will couple this auxiliary rule to the actual decoder whenever the preceding stages have succeeded.
Set
\begin{equation}
  t_j\triangleq(1+2\xi)\frac{M}{j},
  \qquad
  v_j\triangleq V_\alpha(t_j).
  \label{eq:tj-vj}
\end{equation}

Call a fragment whose true starting coordinate is~$s$ \emph{ambiguous at stage~$j$} if some false starting coordinate~$i\neq s$ belongs to its candidate set~$\Lambda_j(Y)$. The next lemma shows that ambiguous fragments cover only a vanishing fraction of the codeword.

\begin{lemma}[Vanishing loss from ambiguous fragments]
\label{lem:localization}
Fix a message~$u$ and use the ideal stage-$j$ positional reference.
Let~$W_{j,N}$ be the total number of codeword coordinates contained in ambiguous fragments of length at least~$L_{j,N}$.
Then
\begin{equation}
  \mathrm E_{D_N,B}[W_{j,N}\mid U=u]
  \leq 2^jN^{1-2\xi},
  \qquad
  \frac{W_{j,N}}{N}
  \xrightarrow{\mathrm p}0.
  \label{eq:ambiguous-coverage-bound}
\end{equation}
If~$O_j^\star$ denotes the observation set obtained with the correct reference and persistent fragment reuse, and~$m_j^\star\triangleq|O_j^\star|$, then
\begin{equation}
  \frac{m_j^\star}{n_j}
  \xrightarrow{\mathrm p}
  v_j.
  \label{eq:observation-set-limit}
\end{equation}
Every fragment that is uniquely localized under the correct reference is localized to its true starting coordinate.
\end{lemma}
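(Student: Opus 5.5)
The plan is a union bound over false starting positions, using that $X$ is uniform and i.i.d.\ by Lemma~\ref{lem:random-shift}. Under the ideal reference, the decoder compares $Y$ with the true values $X_{i+r}$ for $r\in Q_j(i,\ell)$, so the true start $s$ always lies in $\Lambda_j(Y)$. Hence a singleton candidate set must equal $\{s\}$, which gives the last claim immediately.

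For the expectation bound, I would condition on the cut vector $B$; the cuts are independent of $X$, and of the whole design and message. Fix a fragment with interval $[s,s+\ell)$, $\ell\geq L_{j,N}$, and a false start $i\neq s$. The event $i\in\Lambda_j(Y)$ requires $X_{s+r}=X_{i+r}$ for all $r\in Q_j(i,\ell)$, and $|Q_j(i,\ell)|\geq j\ell/M-j$ by the discrepancy bound~\eqref{eq:periodic-track-discrepancy} applied to each of the $j$ tracks.

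The main obstacle is that these equalities are not obviously independent when the intervals $[s,s+\ell)$ and $[i,i+\ell)$ overlap. I would handle this with the standard shift argument. Set $d=i-s\neq0$ and order the constraints $X_{s+r}\oplus X_{s+r+d}=0$ by $r$, increasing if $d>0$ and decreasing if $d<0$. Each constraint then involves a coordinate $X_{s+r+d}$ not appearing in any earlier constraint. So the constraints form a triangular system of independent linear equations over $\F$ in uniform bits, and their joint probability is exactly $2^{-|Q_j(i,\ell)|}$. This gives
\[
\Pr\{i\in\Lambda_j(Y)\mid B\}\leq 2^{j}\,2^{-j\ell/M}\leq 2^{j}N^{-(1+2\xi)},
\]
using $\ell\geq(1+2\xi)(M/j)\log N$. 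A union bound over at most $N$ false starts shows that each long fragment is ambiguous with probability at most $2^jN^{-2\xi}$. Summing $\ell$ times this probability over the fragments, whose total length is at most $N$, yields $\mathrm E[W_{j,N}\mid U=u]\leq 2^jN^{1-2\xi}$. Markov's inequality then gives $W_{j,N}/N\xrightarrow{\mathrm p}0$.

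For~\eqref{eq:observation-set-limit}, I would sandwich $m_j^\star$. Every fragment of length at least $L_{j,N}$ that is not ambiguous at stage $j$ is uniquely and correctly localized by stage $j$, possibly earlier, and it is retained in the cache. Hence
\[
H_{j,N}(t_j)-W_{j,N}\leq m_j^\star\leq H_{j,N}(t_j)+o(N).
\]
For the upper bound, correctly placed fragments have disjoint intervals, so each track-$j$ index is observed at most once. The $o(N)$ absorbs the discrepancy between the threshold $L_{j,N}$ and $\ceil{t_j\log N}$, which differ by at most one. Since a length-one shift in the threshold changes the coverage by at most $O(J)$, Lemma~\ref{lem:track-coverage} together with $J/N\xrightarrow{\mathrm p}0$ makes this term negligible. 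Dividing by $n_j$ and applying Lemma~\ref{lem:track-coverage} with $a=j$ and $t=t_j$ gives $m_j^\star/n_j\xrightarrow{\mathrm p}v_j$.
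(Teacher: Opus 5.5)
Your proposal is correct and follows the paper's proof essentially step for step. You condition on $B$ and count at least $(1+2\xi)\log N-j$ reference comparisons for each false start. You get the exact probability $2^{-|Q_j(i,\ell)|}$ from independence of the equalities; your fresh-variable triangular ordering is a direct form of the paper's acyclic-graph argument. You then take a union bound over fewer than $N$ false starts, sum weighted by fragment length, apply Markov, and use the sandwich $H_{j,N}(t_j)-W_{j,N}\leq m_j^\star\leq H_{j,N}(t_j)$.

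The only superfluous piece is your $o(N)$ slack, because $L_{j,N}=\ceil{t_j\log N}$ exactly, so the upper bound needs no correction. That is fortunate, since your deterministic $O(J)$ bound for a unit threshold shift is not valid as stated. Each removed fragment of length $\ceil{t_j\log N}$ carries $\Theta(\log N)$ track-$j$ coordinates, so making that term $o(N)$ would require a probabilistic count of such fragments.
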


\begin{proof}
Condition on the cut vector, and consider a particular fragment with true start~$s$ and length~$\ell\geq L_{j,N}$.
For a false candidate~$i\neq s$, consistency requires
\begin{equation}
  X_{s+r}=X_{i+r}
  \qquad\text{for every }r\in Q_j(i,\ell).
  \label{eq:false-placement-equalities}
\end{equation}
Every block of~$M$ consecutive coordinates contains exactly~$j$ coordinates of~$S_j$.
Therefore, with~$q\triangleq|Q_j(i,\ell)|$,
\begin{equation}
  q
  \geq j\floor{\frac{\ell}{M}}
  \geq (1+2\xi)\log N-j.
  \label{eq:number-reference-comparisons}
\end{equation}

By Lemma~\ref{lem:random-shift}, the coordinates of~$X$ are independent fair bits. To count the independent constraints in~\eqref{eq:false-placement-equalities}, form a graph whose vertices are codeword coordinates and whose edges are the pairs~$\{s+r,i+r\}$ for~$r\in Q_j(i,\ell)$. All edges have the same nonzero displacement~$i-s$. Within each residue class modulo~$|i-s|$, the graph is a subgraph of a path, so it contains no cycle. A forest with $q$ edges imposes $q$ independent equality constraints on independent fair binary variables.
Therefore,
\begin{equation}
  \Pr_{D_N}\{i\in\Lambda_j(Y)\mid B,U=u\}
  =2^{-q}
  \leq 2^jN^{-(1+2\xi)}.
  \label{eq:false-candidate-probability}
\end{equation}
There are fewer than~$N$ false candidates.
A union bound therefore gives
\begin{equation}
  \Pr_{D_N}\{Y\text{ is ambiguous at stage }j\mid B,U=u\}
  \leq 2^jN^{-2\xi}.
  \label{eq:fragment-ambiguity-probability}
\end{equation}
The fragments are disjoint and their lengths sum to~$N$.
After multiplying~\eqref{eq:fragment-ambiguity-probability} by fragment lengths and summing, we obtain
\begin{equation}
  \mathrm E_{D_N}[W_{j,N}\mid B,U=u]
  \leq 2^jN^{-2\xi}
  \sum_{Y\in\calY}|Y|
  =2^jN^{1-2\xi}.
\end{equation}
This proves the expectation bound, and Markov's inequality proves the convergence in~\eqref{eq:ambiguous-coverage-bound}.

The true start~$s$ always belongs to~$\Lambda_j(Y)$ when the reference is correct. Hence, if the candidate set is a singleton, its element must be~$s$. Correctly placed fragments are disjoint, so they never give conflicting observations at the same coordinate. All track-$j$ coordinates lying in long, nonambiguous fragments belong to~$O_j^\star$.
Consequently,
\begin{equation}
  H_{j,N}(t_j)-W_{j,N}
  \leq m_j^\star
  \leq H_{j,N}(t_j).
  \label{eq:observations-sandwich}
\end{equation}
Divide by~$n_j$, use~$n_j=N/M+O(1)$, and apply Lemma~\ref{lem:track-coverage} and~\eqref{eq:ambiguous-coverage-bound} to obtain~\eqref{eq:observation-set-limit}.
\end{proof}

The last ingredient is a standard rank estimate.

\begin{lemma}[Rank of a random observed submatrix]
\label{lem:random-rank}
Let~$G\in\F^{k\times n}$ have mutually independent Bernoulli$(1/2)$ entries, and let~$O\subseteq\{0,\ldots,n-1\}$ be independent of~$G$.
Let $G_O$ denote the submatrix consisting of the columns indexed by $O$.
Conditional on~$O$ and with~$m\triangleq|O|\geq k$,
\begin{equation}
  \Pr\{\rank(G_O)<k\mid O\}
  <2^{k-m}.
  \label{eq:random-rank-bound}
\end{equation}
\end{lemma}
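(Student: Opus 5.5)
Because $O$ is independent of $G$, we can condition on $O$ and regard it as a fixed set of size $m\geq k$. The columns of $G$ indexed by $O$ are then mutually independent and uniform on $\F^k$. Hence $G_O$ is a uniformly distributed $k\times m$ binary matrix, and the task reduces to the classical count of rank-deficient random matrices. The plan is to characterize rank deficiency through the left null space and then apply a union bound.

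\textbf{Step 1 (reduction to a null vector).} Note that $\rank(G_O)<k$ holds exactly when the rows of $G_O$ are linearly dependent. Equivalently, some nonzero $w\in\F^k$ satisfies $wG_O=0$.

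\textbf{Step 2 (probability for a fixed $w$).} Fix a nonzero $w$. Each entry of $wG_O$ is the inner product of $w$ with an independent uniform column of $G_O$. Since $w\neq0$, this inner product is a fair bit. The $m$ entries are independent because the columns are independent. Therefore
\[
  \Pr\{wG_O=0\mid O\}=2^{-m}.
\]

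\textbf{Step 3 (union bound).} Sum over the $2^k-1$ nonzero vectors $w$:
\[
  \Pr\{\rank(G_O)<k\mid O\}\leq(2^k-1)2^{-m}<2^{k-m}.
\]
The last inequality is strict because $2^k-1<2^k$, which gives the strict bound in~\eqref{eq:random-rank-bound}.

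No step here is a genuine obstacle. The only point needing care is the conditioning in the first paragraph: the independence of $O$ and $G$ is exactly what keeps the selected columns uniform and independent after conditioning on $O$. Without it, a data-dependent choice of columns could bias the submatrix. In the application, this independence will come from Lemma~\ref{lem:random-shift} together with the fact that the ideal observation set depends only on $X$ and $B$.
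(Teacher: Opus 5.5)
Your proof is correct and matches the paper's argument: reduce rank deficiency to the existence of a nonzero $w\in\F^k$ with $wG_O=0$, note that each such event has probability $2^{-m}$ because the $m$ inner products are independent fair bits, and apply a union bound over the $2^k-1$ nonzero vectors to get $(2^k-1)2^{-m}<2^{k-m}$. Your explicit remark that independence of $O$ and $G$ keeps the selected columns uniform after conditioning is a small clarification the paper leaves implicit.
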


\begin{proof}
The matrix~$G_O$ fails to have row rank~$k$ only if some nonzero row vector~$a\in\F^k$ satisfies~$aG_O=0$.
For each fixed nonzero~$a$, its inner product with every selected column is an independent fair bit, so this event has probability~$2^{-m}$.
A union bound over the~$2^k-1$ nonzero vectors gives
\begin{equation}
  \Pr\{\rank(G_O)<k\mid O\}
  \leq(2^k-1)2^{-m}<2^{k-m}.
\end{equation}
\end{proof}

\subsection{Proof of finite-track reliability}

\begin{proof}[Proof of Theorem~\ref{thm:decoding}]
Fix~$u\in\calU_N$.
For stage~$j$, let~$t_j$ and~$v_j$ be as in~\eqref{eq:tj-vj}, and choose a constant~$\eta_j>0$ such that
\begin{equation}
  r_j+2\eta_j<v_j.
  \label{eq:track-rate-slack}
\end{equation}
By Lemma~\ref{lem:localization},
\begin{equation}
  \Pr\{m_j^\star<(r_j+\eta_j)n_j\}
  \longrightarrow0.
  \label{eq:enough-observations}
\end{equation}

The ideal observation set~$O_j^\star$ is a function of~$X$ and the cut vector~$B$. Lemma~\ref{lem:random-shift} shows that it is independent of~$G_j$. On the event that~$m_j^\star\geq(r_j+\eta_j)n_j$, we have
\begin{equation}
  m_j^\star-k_j
  \geq \eta_jn_j,
\end{equation}
because~$k_j=\floor{n_jr_j}\leq n_jr_j$.
Lemma~\ref{lem:random-rank} now gives
\begin{equation}
  \Pr\{\rank(G_{j,O_j^\star})<k_j,
  \ m_j^\star\geq(r_j+\eta_j)n_j\}
  \leq 2^{-\eta_jn_j}.
  \label{eq:stage-rank-failure}
\end{equation}
Since~$M$ is fixed, a union bound over~$j=1,\ldots,M-1$ shows that, with probability tending to one, every stage has enough observations and every corresponding observed submatrix has full row rank.

We finish by induction over the decoding stages. Before stage~$1$, track~$0$ is known exactly. Suppose tracks~$0,\ldots,j-1$ have been reconstructed correctly. Then the decoder's actual reference equals the correct reference used to define~$O_j^\star$. Because localized placements remain in the persistent cache, its actual observation set is therefore exactly~$O_j^\star$.
Lemma~\ref{lem:localization} implies that each uniquely localized fragment is placed correctly, so the right-hand side of~\eqref{eq:track-decoding-system} equals
\begin{equation}
  V^{(j)}_{O_j^\star}
  =U^{(j)}G_{j,O_j^\star}.
\end{equation}
Full row rank makes~$U^{(j)}$ the unique solution.
The decoder therefore reconstructs track~$j$ exactly and adds its true transmitted values to the reference.
This completes the induction.

None of the preceding probability bounds depends on the value of~$u$.
Therefore,
\begin{equation}
  \max_{u\in\calU_N}
  \Pr_{D_N,B}\{\hat U\neq u\mid U=u\}
  \longrightarrow0.
\end{equation}
Using
$\Pr_{D_N,B}\{\hat U\neq u\mid U=u\}
=\mathrm E_{D_N}[P_e(u\mid D_N)]$
proves~\eqref{eq:ensemble-error}.
\end{proof}

\subsection{Proof of capacity achievability}

\begin{proof}[Proof of Theorem~\ref{thm:capacity-achievability}]
For~$\xi\geq0$, define
\begin{equation}
  A_M(\xi)
  \triangleq\frac1M\sum_{j=1}^{M-1}
  V_\alpha\left((1+2\xi)\frac{M}{j}\right).
  \label{eq:margin-rate-sum}
\end{equation}
This function is continuous at~$\xi=0$, and~$A_M(0)=R_M(\alpha)$.
Hence, if~$0\leq R<R_M(\alpha)$, there is a sufficiently small~$\xi>0$ such that~$R<A_M(\xi)$.
For~$R>0$, set
\begin{equation}
  r_j
  \triangleq\frac{R}{A_M(\xi)}
  V_\alpha\left((1+2\xi)\frac{M}{j}\right),
  \qquad 1\leq j<M.
  \label{eq:capacity-track-rate-choice}
\end{equation}
For~$R=0$, set all~$r_j=0$.
These choices satisfy the strict track-rate inequalities and
\begin{equation}
  \frac1M\sum_{j=1}^{M-1}r_j=R.
\end{equation}

Because~$n_j=N/M+O(1)$ and the number of tracks is fixed,
\begin{align}
  \frac{K}{N}
  &=\frac1N\sum_{j=1}^{M-1}\floor{n_jr_j}\\
  &\longrightarrow
  \frac1M\sum_{j=1}^{M-1}r_j
  =R.
  \label{eq:deterministic-code-rate}
\end{align}

It remains to select valid deterministic public designs. Theorem~\ref{thm:decoding} gives
\begin{equation}
  \mathrm E_{D_N}[\overline P_e(D_N)]
  =\frac1{|\calU_N|}\sum_{u\in\calU_N}
  \mathrm E_{D_N}[P_e(u\mid D_N)]
  \leq\epsilon_N
  \longrightarrow0.
  \label{eq:average-design-error}
\end{equation}
Let~$\mathcal F_N$ be the event that every~$G_j$ has row rank~$k_j$. The same argument as in Lemma~\ref{lem:random-rank}, now using all~$n_j$ columns, gives
\begin{equation}
  \Pr(\mathcal F_N^c)
  \leq\sum_{j=1}^{M-1}2^{k_j-n_j}
  \longrightarrow0,
  \label{eq:full-generator-probability}
\end{equation}
where the last limit follows from~$r_j<1$.
Thus
\begin{equation}
  \mathrm E_{D_N}[\overline P_e(D_N)\mid\mathcal F_N]
  \leq\frac{\epsilon_N}{\Pr(\mathcal F_N)}
  \longrightarrow0.
\end{equation}
For every sufficiently large~$N$, choose a realization~$d_N\in\mathcal F_N$ whose average error is no larger than this conditional mean. Every component map~$U^{(j)}\mapsto U^{(j)}G_j$ is injective under~$\mathcal F_N$; interleaving and adding the fixed shift preserve injectivity.
Hence~$d_N$ specifies a valid deterministic code, and~$\overline P_e(d_N)\to0$.

We finally prove the finite-$M$ gap.
Define, for~$0<x\leq1$,
\begin{equation}
  h(x)\triangleq\left(1+\frac{\alpha}{x}\right)e^{-\alpha/x},
  \qquad h(0)\triangleq0.
\end{equation}
Both $xe^{-\alpha/x}$ and $h(x)$ tend to zero as $x\downarrow0$, so these endpoint extensions are continuous.
Since
\begin{equation}
  \frac{\mathrm d}{\mathrm dx}
  \left(xe^{-\alpha/x}\right)=h(x),
\end{equation}
we have
\begin{equation}
  \int_0^1h(x)\,\mathrm dx=e^{-\alpha}.
  \label{eq:capacity-integral}
\end{equation}
Moreover,
\begin{equation}
  h'(x)=\frac{\alpha^2}{x^3}e^{-\alpha/x}>0,
\end{equation}
so~$h$ is increasing, and
\begin{equation}
  R_M(\alpha)
  =\frac1M\sum_{j=0}^{M-1}h\left(\frac jM\right)
\end{equation}
is the left Riemann sum for~\eqref{eq:capacity-integral}.
Therefore~$R_M(\alpha)\leq e^{-\alpha}$.
For each subinterval, monotonicity also gives
\begin{align}
  0
  &\leq
  \int_{j/M}^{(j+1)/M}h(x)\,\mathrm dx
  -\frac1M h\left(\frac jM\right)\\
  &\leq
  \frac1M\left[
  h\left(\frac{j+1}{M}\right)
  -h\left(\frac jM\right)
  \right].
\end{align}
Summing over~$j=0,\ldots,M-1$ telescopes and yields
\begin{equation}
  0\leq e^{-\alpha}-R_M(\alpha)
  \leq\frac{h(1)-h(0)}{M}
  =\frac{(1+\alpha)e^{-\alpha}}{M}.
\end{equation}

Given any~$R<e^{-\alpha}$, the preceding bound allows us to choose a fixed $M$ large enough that~$R<R_M(\alpha)$.
The first part of the proof then supplies a deterministic SR code sequence of rate~$R$ and vanishing average error.
Together with the converse in~\cite{shomorony2021torn}, this proves capacity achievability.
\end{proof}

\section{Conclusion}
\label{sec:conclusion}

We introduced successive revelation, a capacity-achieving coding method for the binary stochastic torn-paper channel under the average-error criterion. The construction begins with a known bootstrap track and incorporates each recovered payload track into the positional reference.
As the reference grows, the decoder can localize progressively shorter fragments. For a fixed number of tracks, the resulting rate is an average of long-fragment coverage values, and its gap to the TPC capacity is $O(1/M)$.
Hence, every rate below $e^{-\alpha}$ is achieved by a finite-track construction with vanishing average decoding error probability.

\printbibliography

@article{shomorony2021torn,
  author={Shomorony, Ilan and Vahid, Alireza},
  journal={IEEE Transactions on Information Theory}, 
  title={Torn-Paper Coding}, 
  year={2021},
  volume={67},
  number={12},
  pages={7904--7913}
}

@inproceedings{shomorony2020communicating,
  title={Communicating over the torn-paper channel},
  author={Shomorony, Ilan and Vahid, Alireza},
  booktitle={2020 IEEE Global Communications Conference (GLOBECOM)},
  pages={1--6},
  year={2020},
  organization={IEEE}
}

@article{liu2026improved,
  title={Improved Torn Paper Coding via Local Alignment},
  author={Liu, Junsheng and Raviv, Netanel},
  journal={arXiv preprint arXiv:2605.23076},
  year={2026}
}

@inproceedings{ravi2021capacity,
  title={Capacity of the torn paper channel with lost pieces},
  author={Ravi, Aditya Narayan and Vahid, Alireza and Shomorony, Ilan},
  booktitle={2021 IEEE International Symposium on Information Theory (ISIT)},
  pages={1937--1942},
  year={2021},
  organization={IEEE}
}

@article{ravi2026recovering,
  title={Recovering a message from an incomplete set of noisy fragments},
  author={Ravi, Aditya Narayan and Vahid, Alireza and Shomorony, Ilan},
  journal={IEEE Transactions on Information Theory},
  year={2026},
  publisher={IEEE}
}

@article{bar2023adversarial,
  title={Adversarial torn-paper codes},
  author={Bar-Lev, Daniella and Marcovich, Sagi and Yaakobi, Eitan and Yehezkeally, Yonatan},
  journal={IEEE Transactions on Information Theory},
  volume={69},
  number={10},
  pages={6414--6427},
  year={2023},
  publisher={IEEE}
}

@article{walter2026coding,
  title={Coding Schemes for the Noisy Torn Paper Channel},
  author={Walter, Frederik and Abu-Sini, Maria and Weinhardt, Nils and Wachter-Zeh, Antonia},
  journal={arXiv preprint arXiv:2601.11501},
  year={2026}
}

@inproceedings{wang2024break,
  title={Break-resilient codes for forensic {3D} fingerprinting},
  author={Wang, Canran and Sima, Jin and Raviv, Netanel},
  booktitle={2024 IEEE International Symposium on Information Theory (ISIT)},
  pages={3148--3153},
  year={2024},
  organization={IEEE}
}

@article{wang2026break,
  title={Break-Resilient Codes},
  author={Wang, Canran and Sima, Jin and Raviv, Netanel},
  journal={IEEE Transactions on Information Theory},
  year={2026},
  publisher={IEEE}
}

@article{wang2026break2,
  title={Break-resilient codes with loss tolerance},
  author={Wang, Canran and Liwang, Minghui and Raviv, Netanel},
  journal={arXiv preprint arXiv:2601.14623},
  year={2026}
}

@article{wang2026optimal,
  title={Optimal Break-Resilient Codes},
  author={Wang, Canran},
  journal={arXiv preprint arXiv:2607.19673},
  year={2026}
}

@article{nassirpour2020embedded,
  title={Embedded codes for reassembling non-overlapping random {DNA} fragments},
  author={Nassirpour, Sajjad and Vahid, Alireza},
  journal={IEEE Transactions on Molecular, Biological, and Multi-Scale Communications},
  volume={7},
  number={1},
  pages={40--50},
  year={2021},
  publisher={IEEE}
}

@article{nassirpour2023dna,
  title={DNA merge-sort: A family of nested varshamov-tenengolts reassembly codes for out-of-order media},
  author={Nassirpour, Sajjad and Shomorony, Ilan and Vahid, Alireza},
  journal={IEEE Transactions on Communications},
  volume={72},
  number={3},
  pages={1303--1317},
  year={2023},
  publisher={IEEE}
}

@article{jiao2025efficient,
  title={Efficient Nested Hash Reassembly Codes for Torn Paper Channels},
  author={Jiao, Xiaopeng and Jiao, Botao and Mu, Jianjun and Han, Hui},
  journal={IEEE Transactions on Communications},
  volume={73},
  number={10},
  pages={8607--8622},
  year={2025},
  publisher={IEEE}
}

\end{document}